\definecolor{darkred}  {rgb}{0.5,0,0}
\definecolor{darkblue} {rgb}{0,0,0.5}
\definecolor{darkgreen}{rgb}{0,0.5,0}
\newtheorem{theorem}{Theorem}
\newtheorem{proposition}{Proposition}
\newtheorem{lemma}{Lemma}
\theoremstyle{definition}
\newcommand{\ket}[1]{|#1\rangle}
\newcommand{\ip}[2]{\langle #1|#2\rangle}
\newcommand{\op}[2]{|#1\rangle \langle #2|}
\newcommand{\mc}[1]{\mathcal{#1}}
\newcommand{\mbf}[1]{\mathbf{#1}}
\newcommand{\aLOCC}{\overline{\text{LOCC}}}
\title{Asymptotic State Discrimination and a Strict Hierarchy in Distinguishability Norms}
\author{Eric Chitambar $^1$, Min-Hsiu Hsieh $^2$
\\[4mm]
\textit{$^1$ Department of Physics and Astronomy, Southern Illinois University,}\\ 
\textit{Carbondale, Illinois 62901, USA}\\
\textit{$^2$ Centre for Quantum Computation \& Intelligent Systems (QCIS),}\\
\textit{Faculty of Engineering and Information Technology (FEIT),}\\
\textit{University of Technology Sydney (UTS), NSW 2007, Australia}}
\date{}
\begin{document}
\maketitle

\begin{abstract}
In this paper, we consider the problem of discriminating quantum states by local operations and classical communication (LOCC) when an arbitrarily small amount of error is permitted.  This paradigm is known as asymptotic state discrimination, and we derive necessary conditions for when two multipartite states of any size can be discriminated perfectly by asymptotic LOCC.  We use this new criterion to prove a gap in the LOCC and separable distinguishability norms.  We then turn to the operational advantage of using two-way classical communication over one-way communication in LOCC processing.  With a simple two-qubit product state ensemble, we demonstrate a strict majorization of the two-way LOCC norm over the one-way norm.
\end{abstract}

\section{Introduction}

Any realistic scheme for processing information will inevitably encounter some experimental error.  Consider, for instance, the task of quantum state identification.  In the binary one-shot problem, a quantum system is prepared in some state $\rho$ with probability $p_\rho$ and another state $\sigma$ with probability $p_\sigma=1-p_\rho$.  A measurement is performed on the system, and based on this result, a guess is made on the state's identity.   The goal is to choose a measurement strategy that optimizes the probability of correctly identifying the state.  In practice, each experimental setup implementing this process will have unavoidable imperfections that generate a nonzero probability of error.  Hence for state discrimination, the experimental optimum refers to the success probability that can be approached arbitrarily close as the experimental errors are made smaller and smaller.  This paradigm is known as \textit{asymptotic state discrimination} since it involves a sequence of different measurement strategies with respective success probabilities that approach optimality in the limit.  

Asymptotic state discrimination is usually not considered on its own since for measurements performed across some quantum system $\mathcal{H}$, there is no effective difference between asymptotic and non-asymptotic processes.  To decide whether or not some success probability $p$ is asymptotically achievable, one need only consider whether it is theoretically possible to obtain $p$ \textit{exactly}.  This statement just reflects the mathematical fact that for some fixed number of outcomes, the set of positive-operator value measures (POVMs) on $\mathcal{H}$ is compact, and so the limit to any sequence of quantum measurements is itself a quantum measurement.  

The situation is quite different when the quantum system consists of $N$ different parts, $\mathcal{H}:=d_1\otimes...\otimes d_N$, and the subsystems are split between $N$ spatially separated parties.  In the state discrimination problem, the parties attempt to identify some globally shared state.  Being confined to their own respective laboratories, they can only investigate the state's identity by performing local measurements and globally communicating the measurement outcomes, a process known as local operations and classical communication (LOCC).  
Recently, much attention has been given to the mathematical properties of LOCC operations \cite{Kleinmann-2011a, Childs-2012a, Chitambar-2012c, Childs-2013a}, and unlike the full set of global measurements on $\mathcal{H}$, it turns out that the set of LOCC measurements is \textit{not} compact \cite{Chitambar-2012a, Chitambar-2012c}.  As a result, even if quantum theory prohibits LOCC from identifying two states with \textit{exactly} some success probability $p$, it still may be possible to attain  $p$ in the asymptotic sense.  Consequently, to truly characterize the limitations of realistic LOCC state discrimination, one must look beyond the question of exact discrimination and consider the asymptotic problem on its own.

To be a bit more formal, we say that a success probability $p$ can be obtained by asymptotic LOCC if for every $\epsilon>0$, there exists an LOCC protocol that correctly identifies the given state with probability at least $p-\epsilon$.  The POVM actually obtaining success probability $p$ belongs to $\aLOCC$, which is the set-closure of all LOCC POVMs \cite{Chitambar-2012c}.  To study $\aLOCC$, it is often helpful to consider the class of \textit{separable} operations (SEP).  A POVM $\{\Pi_i\}$ is $N$-partite separable if each $\Pi_i$ is separable, i.e. it can be expressed as a positive combination of product projectors onto $\mathcal{H}$, i.e. $\Pi_i=\sum_j\lambda_j\bigotimes_{k=1}^N\op{\phi_j^{(k)}}{\phi_j^{(k)}}$ with $\lambda_j>0$.  Any task achievable by $\aLOCC$ can also be achieved by SEP, however the converse is not true \cite{Bennett-1999a}.  Thus, to fully understand the operational power of $\aLOCC$, more fine-grained tools are needed beyond just a relaxation to separable operations.  


The class of asymptotic LOCC is also vital to the subject of distinguishability norms \cite{Matthews-2009a, Reeb-2011a, Lancien-2013a, Aubrun-2013a}.  Let $Herm(\mc{H})$ be the set of hermitian operators acting on $\mc{H}$.  Each $M\in Herm(\mc{H})$ can be uniquely decomposed as $M=R-S$ where $R$ and $S$ are orthogonal non-negative operators, and the trace norm of $M$ is given by $||M||_1=tr(R)+tr(S)$.  Denote a generic two-outcome ordered POVM by $\pi:=\{\Pi_R,\Pi_S\}$, and let $\Omega$ be any collection of such POVMs.  One can define the real function $||\cdot||_\Omega$ on $Herm(\mc{H})$ by 
\[||M||_{\Omega}:=\sup_{\pi\in\Omega}[tr(\Pi_R R)+tr(\Pi_S S)].\]  
A clear operational interpretation can be given to this function as follows.  For any ensemble of states $\{(p_\rho,\rho);(p_\sigma,\sigma)\}$, one defines the operator $M=p_\rho\rho-p_\sigma\sigma$.  Then $1/2(1-||M||_{\Omega})$ gives the infimum error probability of distinguishing $\rho$ from $\sigma$ among all POVMs belonging to $\Omega$ \cite{Matthews-2009a}.  For most operationally interesting classes of POVMs, it can be shown that $||\cdot||_{\Omega}$ is actually a norm on $Herm(\mathcal{H})$.  The well-known Helstrom bound is recovered when $\Omega$ is the full set of POVMs acting globally on $\mc{H}$; i.e. $||M||_{\text{GLOBAL}}=||M||_1$ \cite{Holevo-1973a, Helstrom-1976a}.  On the other hand, by choosing $\Omega$ to be the set of all LOCC POVMs, one obtains the so-called LOCC norm.  Here, however, non-compactness of LOCC means the error probability $1/2(1-||M||_{\text{LOCC}})$ is obtained by some POVM in $\aLOCC$ and not necessarily LOCC itself.


Distinguishability norms have an important connection to the phenomenon of data hiding \cite{Terhal-2001a, DiVincenzo-2002a, Matthews-2009a,  Lancien-2013a, Aubrun-2013a}.  For example, in any $d\otimes d$ system, pairs of orthogonal states $\rho$ and $\sigma$ exist for which $\tfrac{1}{2}||\rho-\sigma||_1=1$, and yet $\tfrac{1}{2}||\rho-\sigma||_{\text{LOCC}}\leq\frac{1}{d+1}$.  This implies the ability perfectly encode a single bit in the ensemble $\{(1/2,\rho);(1/2,\sigma)\}$ so that only an arbitrarily small amount of the information can be recovered using LOCC.  Distinguishability norms have also emerged as useful concepts in other contexts such as proving faithfulness of the squashed entanglement \cite{Brandao-2011a} and analyzing the quantum complexity in deciding separability \cite{Milner-2013a}.

With distinguishability norms, we have a mathematically precise way to compare different classes of POVMs.  We say that a class $\mathcal{C}$ is more powerful than class $\mathcal{C}'$ if $||M||_\mathcal{C}\geq ||M||_{\mathcal{C}'}$ for all $M\in Herm(\mc{H})$, and there exists some $M_0$ for which $||M_0||_\mathcal{C}> ||M_0||_{\mathcal{C}'}$.  In this paper, the classes of operations we will focus on are GLOBAL, SEP, LOCC, and 1-LOCC, where the latter refers to LOCC operations under the restriction of one-way classical communication.  In terms of distinguishability norms, these satisfy the ordering
\begin{equation}
\label{Eq:NormOrder}
||M||_1\geq ||M||_{\text{SEP}}\geq ||M||_{\text{LOCC}}\geq ||M||_{1-\text{LOCC}}.
\end{equation}
The data hiding literature offers nice examples of when the first inequality is strict \cite{DiVincenzo-2002a, Matthews-2008b}.  At the present, it is well-known that certain tasks can be accomplished by separable operations but not by LOCC \cite{Bennett-1999a, Duan-2007a, Chitambar-2012a}, and likewise, two-way LOCC offers greater possibilities than one-way LOCC \cite{Bennett-1996a, Cohen-2007a, Koashi-2007a, Owari-2008a, Xin-2008a, Chitambar-2013a, Nathanson-2013a}.  However, to our knowledge neither the second nor third inequalites in Eq. \ref{Eq:NormOrder} have been shown as strict.  Below we will construct explicit operators for which a separation between the distinguishability norms can be observed.

In general, it is not well-understood when bi-directional communication offers an operational advantage over uni-directional communication in LOCC information processing.  For instance, to convert one bipartite pure entangled state to another using LOCC, one-way classical communication suffices \cite{Lo-1997a}.  Another example even more relevant to the problem at hand involves the minimum error discrimination of any two bipartite pure states.  For pure states $\ket{\psi}$ and $\ket{\phi}$ of any dimension and any number of parties, the global optimal minimum-error discrimination can be achieved by one-way LOCC \cite{Walgate-2000a, Virmani-2001a}.  Thus, with respect to distinguishability norms, we have that
\begin{equation}
||M||_1=||M||_{\text{LOCC}}=||M||_{1-\text{LOCC}}
\end{equation}
whenever $M$ is rank two.  Given what we have learned in over 20 years of research into LOCC state discrimination, it is not surprising that the first equality fails to hold in general when $M$ has greater rank (one conclusive proof for the rank three case is given in \cite{Chitambar-2013b}).  Below we will show that likewise the second inequality fails to hold in general when $M$ is rank three.  This will be done by considering one pure state $\psi$ and one rank-two mixed state $\rho$ such that
\begin{equation}
||\psi-\rho||_{\text{LOCC}}>||\psi-\rho||_{1-\text{LOCC}}.
\end{equation}

  
\section{Conditions for Asymptotic Discrimination}

We begin by deriving a general necessary criterion for when two orthogonal states can be perfectly distinguished by $\aLOCC$.  Note that no generality is lost by restricting attention to perfect discrimination of orthogonal states since if $\rho$ and $\sigma$ are non-orthogonal, optimal discrimination in the minimum error sense is equivalent to asymptotic perfect discrimination of the two orthogonal (unnormalized) states $R$ and $S$, where $p_\rho\rho-p_\sigma\sigma=R-S$ \cite{Virmani-2001a, Chitambar-2013a}.

Our approach here is largely inspired by the work of Kleinmann \textit{et al.} \cite{Kleinmann-2011a}, and we utilize some of the tools introduced in that paper.  Let $\{(p_\rho,\rho);(p_\sigma,\sigma)\}$ be an ensemble of orthogonal states with any choice of prior probabilities satisfying $p_\sigma >p_\rho$.  We are free to choose $p_\sigma >p_\rho$ since the prior probabilities are irrelevant for perfect discrimination \footnote{Suppose $\sigma$ and $\rho$ can be perfectly distinguished by asymptotic LOCC with nonzero \textit{a priori} probabilities $(p_\sigma,p_\rho)$, and let $(p'_\sigma,p'_\rho)$ be any other \textit{a priori} probabilities.  For any $\epsilon>0$ there exists an LOCC POVM $\{\Pi^{(\epsilon)}_\sigma,\Pi^{(\epsilon)}_\rho\}$ such that $1-\epsilon<p_\sigma tr[\sigma \Pi^{(\epsilon)}_\sigma]+p_\rho tr[\rho\Pi^{(\epsilon)}_\rho]$.  From this we deduce $tr[\sigma \Pi^{(\epsilon)}_\sigma]>1-\epsilon/p_\sigma$ and $tr[\rho\Pi^{(\epsilon)}_\rho]>1-\epsilon/p_\rho$.  Hence, setting $\epsilon'=\epsilon\cdot( p_\sigma'/p_\sigma+p_\rho'/p_\rho)$ gives $p'_\sigma tr[\sigma \Pi^{(\epsilon)}_\sigma]+p'_\rho tr[\rho\Pi^{(\epsilon)}_\rho]>1-\epsilon'$.}.  Let $\mathcal{P}^{(\epsilon)}$ be any finite-round, finite-outcome LOCC protocol that errs in distinguishing $p_\sigma$ and $p_\rho$ with some probability $\leq\epsilon$.  As usual, we can envision protocol $\mathcal{P}^{(\epsilon)}$ as a tree where the \textit{root node} represents the first local measurement performed.  The different outcomes establish different branches which successively split into more branches after additional rounds of measurements.  
At any point in the protocol, we will refer to the \textit{bias} as the updated state probabilities given all the previous measurement outcomes.  We thus represent the bias by a two-component vector $(p_{\sigma|\lambda},p_{\rho|\lambda})$ where $p_{\sigma|\lambda}+p_{\rho|\lambda}=1$ and $\lambda$ indicates some particular sequence of outcomes.  Let $\Pi_\lambda$ be the POVM element corresponding to outcome sequence $\lambda$, and note that $\Pi_\lambda$ is a product operator.  Finally, the node in the protocol tree associated with $\lambda$ will be denoted by $\mbf{n}_\lambda$, and the probability of reaching this node is $p_\lambda$.  

Suppose that $\{\mbf{n}_{\lambda_k}\}_{k=0}^t$ are the various subsequent nodes obtained by performing a local POVM at node $\mbf{n}_\lambda$.  Then we say that $\mbf{n}_{\lambda_k}$ is a \textit{bias-flipped node} if $p_{X|\lambda}>p_{Y|\lambda}$ and $p_{Y|\lambda_k}\geq p_{X|\lambda_k}$ with $\{X,Y\}=\{\rho,\sigma\}$.  We say a node is a \textit{guessing node} if no more measurements are performed after reaching that node; at which point, the parties make a guess as to the state's identity.  If $\mbf{n}_\lambda$ is a guessing node, then the total error associated with the node is 
\[P_{err}(\mbf{n}_\lambda)=p_\lambda\cdot\min\{p_{\sigma|\lambda},p_{\rho|\lambda}\}.\]  
If $\mbf{n}_\lambda$ is a non-guessing node, the minimum guessing error attainable from node $\mbf{n}_\lambda$ onward is bounded by the global optimal probability $1/2(1-||M||_1)$.  At node $\mbf{n}_\lambda$, the \textit{a posteriori} ensemble is $\{(p_{\rho|\lambda},\rho_\lambda);(p_{\sigma|\lambda},\sigma_\lambda)\}$.  The trace norm can be related to the Hilbert-Schmidt inner product according to $||\alpha X-(1-\alpha)Y||_1\leq\sqrt{1-4\alpha(1-\alpha) tr[XY]}$ \cite{Nielsen-2000a}, and we thus obtain the following bound for non-guessing nodes:
\begin{align}
\label{Eq:PerrTrace}
P_{err}(\mbf{n}_\lambda)&\geq \tfrac{1}{2}[p_\lambda-\sqrt{p_\lambda^2-4 p_\rho p_\sigma tr(\Pi_\lambda\rho\Pi_\lambda\sigma)}].
\end{align}


With the above terminology fixed, our first task is to modify $\mathcal{P}^{(\epsilon)}$ so that the updated \textit{a posteriori} probabilities at each bias-flipped node are exactly equal.  There are various ways to perform this modification such as decomposing each strong measurement into a sequence of weak measurement so that the bias evolves by arbitrarily small increments throughout the protocol \cite{Bennett-1999a, Oreshkov-2005a}.  Alternatively, Kleinmann \textit{et al.} describe a ``pseudo-weak'' approach that simply breaks any measurement into two steps without affecting the overall success probability of the protocol \cite{Kleinmann-2011a}.  Either way, the following construction can be achieved.
\begin{proposition}
\label{Prop:LOCCweak}
Suppose that $\mathcal{P}^{(\epsilon)}$ is some LOCC protocol that distinguishes $\rho$ and $\sigma$ with error probability $\leq\epsilon$.  Then there exits another protocol $\hat{\mathcal{P}}^{(\epsilon)}$ that also distinguishes with error probability $\leq\epsilon$ but with all bias-flipped nodes having a bias $(1/2,1/2)$.  
\end{proposition}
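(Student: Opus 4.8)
The plan is to process the protocol tree of $\mathcal{P}^{(\epsilon)}$ one local measurement at a time and, at every measurement that produces a bias-flipped child, to interpolate a checkpoint node whose bias is exactly $(1/2,1/2)$, all without altering the global two-outcome POVM that the protocol ultimately realizes. Since $\mathcal{P}^{(\epsilon)}$ is finite-round and finite-outcome, there are only finitely many flip events, so it suffices to show how to insert a single checkpoint and then iterate. The organizing observation is that the error probability $p_\sigma\tr[\Pi_\rho\sigma]+p_\rho\tr[\Pi_\sigma\rho]$ depends only on the final POVM $\{\Pi_\sigma,\Pi_\rho\}$; hence any refinement of the tree that reproduces this POVM, with each new leaf inheriting the guess of the branch it refines, leaves the error unchanged.

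First I would put each local measurement into a convenient form. Writing the Kraus operators of the measuring party via their polar decomposition $A_k=U_kE_k^{1/2}$, every outcome probability, and therefore every bias, depends only on the positive parts $E_k^{1/2}$, while the unitaries $U_k$ can be deferred to the end of the step as deterministic local rotations that neither branch nor reweight. This is the ``pseudo-weak'' normalization of Kleinmann \emph{et al.} \cite{Kleinmann-2011a}, and it lets me assume the party applies Hermitian Kraus operators, so that the bias at a child node is governed by $\tr[E_k\,\cdot\,]$ in a form I can vary continuously.

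Next, to place the checkpoint, I would refine the flipping outcome into a sequence of weaker Hermitian measurements whose composite reproduces the original instrument on that branch, following the weak-measurement decomposition of \cite{Bennett-1999a, Oreshkov-2005a}. Factoring the flipping element as $E_k^{1/2}=(E_k^{1-a})^{1/2}(E_k^{a})^{1/2}$ and letting $a$ run over $[0,1]$, the intermediate bias produced by the first factor varies continuously and connects the parent bias (where $p_{X|\lambda}>p_{Y|\lambda}$) to the flipped child bias (where $p_{Y}\geq p_{X}$). By the intermediate value theorem there is an $a^{\ast}$ at which this intermediate bias equals $(1/2,1/2)$; I would declare the corresponding node $\mbf{n}$ the checkpoint, then apply the complementary factor and restore the deferred unitary, so that the composite Kraus operator along this path is again exactly $A_k$. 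The flipped node $\mbf{n}_{\lambda_k}$ and its entire subtree are thus reproduced verbatim, now with a $(1/2,1/2)$ node on the path leading to it; iterating over the finitely many flip events yields $\hat{\mathcal{P}}^{(\epsilon)}$, which remains LOCC because every inserted operation is local to the original measuring party.

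The step I expect to be the main obstacle is verifying that this insertion does not change the overall error. The difficulty is that a checkpoint pinned exactly at $(1/2,1/2)$ cannot be realized as a pure refinement of the original measurement: reaching bias $1/2$ en route to a flipped child forces a first-step element $E_k^{a^{\ast}}\geq E_k$, which necessarily draws measurement weight away from the complementary outcomes and thereby perturbs their post-measurement states. The crux is to argue, exactly as in the weak-measurement constructions cited above, that these complementary branches can be completed and re-coarse-grained so that the summed leaf elements assigned to each guess still equal the original $\Pi_\sigma$ and $\Pi_\rho$; equivalently, that breaking the single measurement into two local steps leaves the realized global POVM invariant. Once this instrument-invariance is in hand, the bound $\leq\epsilon$ transfers immediately and the construction is complete.
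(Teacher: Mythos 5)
Your high-level plan is the same one the paper invokes --- the paper itself gives no detailed construction but simply cites the weak-measurement and ``pseudo-weak'' decompositions of Refs.~\cite{Bennett-1999a, Oreshkov-2005a, Kleinmann-2011a}: split the flipping local measurement into two steps, use continuity and the intermediate value theorem to pin the inserted node at bias $(1/2,1/2)$, and argue that the realized global POVM (hence the error) is unchanged. The gap is precisely in the step you yourself flag as the main obstacle, and your specific decomposition makes it unfixable rather than merely deferred. If the first-step element on the flipping branch is $E_{k_0}^{a}$, the weight left over for the non-flipping outcomes is distributed over two paths whose elements are $E_{k_0}^{a}-E_{k_0}$ (the remainder after completing the flip on the first-step branch) and $\mathbb{I}-E_{k_0}^{a}$ (the complementary first-step outcome). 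To rerun the original subtrees verbatim --- which is what reproducing the final POVM requires --- the composite Kraus operator reaching each outcome $k\neq k_0$ along each of these paths must be proportional to the original $E_k^{1/2}$ (up to the deferred unitary). That forces each of $E_{k_0}^{a}-E_{k_0}$ and $\mathbb{I}-E_{k_0}^{a}$ to be a nonnegative scalar combination $\sum_{k\neq k_0}c_k E_k$, which fails for generic $E_{k_0}$: the operator $E_{k_0}^{a}-E_{k_0}$ is not proportional to $\mathbb{I}-E_{k_0}=\sum_{k\neq k_0}E_k$ unless the nontrivial spectrum of $E_{k_0}$ is degenerate. So the ``instrument invariance'' you hope to import from the cited works is actually false for the power factorization; it is a property of a different interpolation.

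The correct choice --- and the actual content of the pseudo-weak construction of Kleinmann \emph{et al.} --- is the affine interpolation $M_t=tE_{k_0}+(1-t)\mathbb{I}$, $t\in[0,1]$, as the first-step element. Then $M_t\geq E_{k_0}$, the flip branch is completed by the element $M_t^{-1/2}E_{k_0}M_t^{-1/2}$, giving composite Kraus operator exactly $E_{k_0}^{1/2}$, and the two leftover elements $M_t-E_{k_0}=(1-t)(\mathbb{I}-E_{k_0})$ and $\mathbb{I}-M_t=t(\mathbb{I}-E_{k_0})$ are both \emph{proportional} to $\mathbb{I}-E_{k_0}=\sum_{k\neq k_0}E_k$. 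Completing them with the refinements $(\mathbb{I}-E_{k_0})^{-1/2}E_k(\mathbb{I}-E_{k_0})^{-1/2}$ yields composite Kraus operators $\sqrt{1-t}\,E_k^{1/2}$ and $\sqrt{t}\,E_k^{1/2}$ on the two paths, so the post-measurement states coincide with the original ones, the original subtrees can be reattached on both paths, and the leaf elements sum exactly to the original POVM elements $E_k^{1/2}Q_\mu E_k^{1/2}$. Since the bias at the $M_t$-node varies continuously from the parent bias (at $t=0$) to the flipped child's bias (at $t=1$), the intermediate value theorem gives $t^{\ast}$ with bias exactly $(1/2,1/2)$, and your iteration over the finitely many flip events then goes through verbatim.
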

Consider such a protocol $\hat{\mathcal{P}}^{(\epsilon)}$.  Starting from the root node where $p_\sigma>p_\rho$, we track the bias throughout the protocol.  Every branch will either reach a guessing node without the bias flipping at least once (say these nodes belong to set $\mc{B}_1$), or the branch will reach a node with bias $(1/2,1/2)$ (say these nodes belong to set $\mc{B}_2$).  Since $\min\{p_{\sigma|\lambda},p_{\rho|\lambda}\}=p_{\rho|\lambda}$ for all $\mbf{n}_\lambda\in\mc{B}_1$, we have that
\begin{align*}
\label{Eq:Err1}
P_{err}(\mathbf{n}_\lambda)&=p_\lambda p_{\rho|\lambda}=p_\rho p_{\lambda|\rho}=p_\rho tr(\Pi_\lambda\rho)\quad\forall \mbf{n}_\lambda\in\mc{B}_1.
\end{align*}
Summing over these nodes gives the bound
\begin{equation*}
\epsilon\geq\sum_{\lambda:\mbf{n}_\lambda\in\mc{B}_1}P_{err}(\mbf{n}_\lambda)=\sum_{\lambda:\mbf{n}_\lambda\in\mc{B}_1}p_\rho tr(\Pi_\lambda\rho)=p_\rho tr(\Pi^{(\epsilon)}\rho),
\end{equation*}
where $\Pi^{(\epsilon)}:=\sum_{\lambda:\mbf{n}_\lambda\in\mc{B}_1}\Pi_\lambda$.  For the nodes in $\mc{B}_2$, we use Eq. \eqref{Eq:PerrTrace} to obtain the bound
\begin{equation*}
\epsilon\geq\sum_{\lambda:\mbf{n}_\lambda\in\mc{B}_2}\tfrac{1}{2}[p_\lambda-\sqrt{p_\lambda^2-4p_\rho p_\sigma tr(\Pi_\lambda\rho\Pi_\lambda\sigma)}].
\end{equation*}
Thus, in order for an LOCC protocol to distinguish $\sigma$ and $\rho$ with error probability $\leq\epsilon$, there must exist a POVM $\{\Pi^{(\epsilon)},\Pi_\lambda\}_\lambda$ with $\Pi^{(\epsilon)}$ a separable operator and each $\Pi_\lambda$ a product operator, such that for all $\lambda$:
\begin{align}
p_\rho tr(\Pi^{(\epsilon)}\rho)&\leq\epsilon,\label{Eq:consE1}\\
\sum_{\lambda}\tfrac{1}{2}[p_\lambda-\sqrt{p_\lambda^2-4p_\rho p_\sigma tr(\Pi_\lambda\rho\Pi_\lambda\sigma)}]&\leq\epsilon,\label{Eq:consE2}\\
tr[\Pi_\lambda(p_\rho\rho-p_\sigma\sigma)]&=0,\label{Eq:consE3}
\end{align}
where $p_\lambda=p_\rho tr[\Pi_\lambda\rho]+p_\sigma tr[\Pi_\lambda \sigma]$.  Note that the error bounds contained in Eqns. \eqref{Eq:consE1}--\eqref{Eq:consE3} apply to an actual LOCC-implementable POVM.  

Unfortunately, the number of $\Pi_\lambda$ satisfying Eqns. \eqref{Eq:consE1}--\eqref{Eq:consE3} may be unbounded, and to deal with this, we slightly relax the LOCC-implementable condition.  This is done via Carath\'{e}odory's Theorem, which allows us to bound the number of $\Pi_\lambda$.  
\begin{lemma}[Carath\'{e}odory's Theorem~\cite{Rockafellar-1996a}]
Let $S$ be a subset of $\mathbb{R}^n$ and $conv(S)$ its convex hull.  Then any $x \in conv(S)$ can be expressed as a convex combination of at most $n+1$ elements of $S$.
\end{lemma}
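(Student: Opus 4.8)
The plan is to prove this by an iterative reduction on the number of points appearing in a convex representation of $x$. First I would invoke the definition of the convex hull directly: since $x\in\conv(S)$, there are finitely many points $x_1,\ldots,x_m\in S$ and coefficients $\lambda_i>0$ with $\sum_{i=1}^m\lambda_i=1$ such that $x=\sum_{i=1}^m\lambda_i x_i$, where we discard in advance any term carrying zero weight. If $m\leq n+1$ there is nothing to prove, so I would assume $m>n+1$ and aim to exhibit a representation of $x$ using strictly fewer points.

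The key step is a dimension count. The $m-1$ difference vectors $x_2-x_1,\ldots,x_m-x_1$ lie in $\mathbb{R}^n$, and since $m-1>n$ they must be linearly dependent. This produces scalars $\mu_1,\ldots,\mu_m$, not all zero, satisfying $\sum_{i=1}^m\mu_i x_i=0$ together with $\sum_{i=1}^m\mu_i=0$, the latter relation arising from the way $\mu_1$ is defined in terms of $\mu_2,\ldots,\mu_m$. I would then consider the one-parameter family of weights $\lambda_i-t\mu_i$: because the $\mu_i$ sum to zero, these perturbed weights continue to sum to $1$ for every value of $t$, so $x=\sum_{i=1}^m(\lambda_i-t\mu_i)x_i$ remains an affine combination, and a convex one for suitable $t$.

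The remaining task is to choose $t$ so as to annihilate one coefficient while keeping all of them nonnegative. Since the $\mu_i$ are not all zero yet sum to zero, at least one is positive, so I can set $t=\min\{\lambda_i/\mu_i:\mu_i>0\}$, attained at some index $j$. A short sign check then confirms $\lambda_i-t\mu_i\geq 0$ for every $i$ (automatic when $\mu_i\leq 0$, and guaranteed by the minimality of $t$ when $\mu_i>0$), while $\lambda_j-t\mu_j=0$. Dropping the $j$-th term expresses $x$ as a convex combination of at most $m-1$ points of $S$. Iterating this reduction strictly decreases the number of points at each stage, so after finitely many steps one arrives at a representation using at most $n+1$ points, as claimed. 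I do not expect a serious obstacle here; the only place demanding care is the nonnegativity verification that fixes the choice of $t$, together with the observation that some $\mu_i$ is positive so that the minimum defining $t$ is well posed.
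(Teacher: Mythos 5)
Your proof is correct and complete: the reduction from linear dependence of $x_2-x_1,\ldots,x_m-x_1$ to an affine dependence $\sum_i\mu_i x_i=0$, $\sum_i\mu_i=0$, followed by the weight perturbation $\lambda_i-t\mu_i$ with $t=\min\{\lambda_i/\mu_i:\mu_i>0\}$, is exactly the classical argument, and you handle the two delicate points (existence of a positive $\mu_i$, and nonnegativity of the perturbed weights) properly. Note that the paper itself gives no proof of this lemma — it is stated with a citation to Rockafellar — so there is nothing internal to compare against; your argument is the standard one found in that reference and fills in the black box the paper relies on.
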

\noindent  For any POVM $\{\Pi^{(\epsilon)},\Pi_\lambda\}_\lambda$ satisfying Eqns. \eqref{Eq:consE1}--\eqref{Eq:consE3}, define $S:=\{\tfrac{\Pi_\lambda}{tr[\Pi_\lambda]}\}_\lambda$ so that $\frac{1}{\eta}(\mathbb{I}-\Pi^{(\epsilon)})=\frac{1}{\eta}\sum_\lambda tr[\Pi_\lambda]\tfrac{\Pi_\lambda}{tr[\Pi_\lambda]}$ belongs to $conv(S)$, where $\eta=\sum_\lambda tr[\Pi_\lambda]\leq \prod_{i=1}^nd_i$.  The set of hermitian operators acting on $\mathcal{H}=\bigotimes_{i=1}^N d_i$ represents a $\prod_{i=1}^N d^2_i$-dimensional real vector space (every $d\times d$ hermitian matrix is specified by $d^2$ real numbers).  Thus by Carath\'{e}odory's Theorem, there exists a set of non-negative numbers $\{q_\lambda\}_{\lambda=1}^D$ with $D:=\prod_{i=1}^N d^2_i+1$ such that $\mathbb{I}-\Pi^{(\epsilon)}=\sum_\lambda q_\lambda\tfrac{\Pi_\lambda}{tr[\Pi_\lambda]}$ and $\sum_{\lambda=1}^Dq_\lambda=\eta$.

Let $Herm(\mathcal{H})^{\times (D+1)}$ denote the $(D+1)$-fold Cartesian product of hermitian operators acting on $\mathcal{H}$.  We construct a compact subset $\mathcal{Q}\subset\mathbb{R}^D\times Herm(\mathcal{H})^{\times (D+1)}$ as follows.  A collection $\{q_\lambda\}_{\lambda=1}^D\cup \{\Pi^{(\epsilon)},\Pi_\lambda\}_{\lambda=1}^D$ belongs to $\mathcal{Q}$ if: (i) $\Pi^{(\epsilon)}$ is non-negative separable and $\Pi_\lambda$ are non-negative product operators satisfying Eqns. \eqref{Eq:consE1}--\eqref{Eq:consE3} for some $\epsilon \in [0, p_\rho]$, (ii) the $q_\lambda$ are non-negative satisfying $0\leq\sum_\lambda q_\lambda\leq \prod_{i=1}^nd_i$, and (iii) $\mathbb{I}-\Pi^{(\epsilon)}=\sum_\lambda q_\lambda\tfrac{\Pi_\lambda}{tr[\Pi_\lambda]}$.  Under these conditions, $\mathcal{Q}$ is a closed, bounded, and therefore compact set.

By assumption of asymptotic discrimination, we must be able to find a collection $\{q_\lambda\}_{\lambda=1}^D\cup \{\Pi^{(\epsilon)},\Pi_\lambda\}_{\lambda=1}^D$ in $\mathcal{Q}$ for every $\epsilon>0$.  Compactness then assures the existence of some $\{q_\lambda\}_{\lambda=1}^D\cup\{\Pi^{(0)},\Pi_\lambda\}_{\lambda=1}^D$ satisfying Eqns. \eqref{Eq:consE1}--\eqref{Eq:consE3} for $\epsilon=0$, with $\mathbb{I}-\Pi^{(0)}=\sum_{\lambda=1}^D q_\lambda\tfrac{\Pi_\lambda}{tr[\Pi_\lambda]}$.  But with $\epsilon=0$, the elements $q_\lambda\tfrac{\Pi_\lambda}{tr[\Pi_\lambda]}$ themselves satisfy Eqns. \eqref{Eq:consE1}--\eqref{Eq:consE3}.  Note that the sum in \eqref{Eq:consE2} vanishes iff $tr(\Pi_\lambda\rho\Pi_\lambda\sigma)=0$ for each $\Pi_\lambda$.  Hence we obtain our main result:
\begin{theorem}
\label{thm:OurKKB}
If $N$-partite states $\rho$ and $\sigma$ can be perfectly distinguished by asymptotic LOCC, then for each $x\in [1/2,1]$ there must exist a POVM $\{\Pi_0,\Pi_\lambda\}_{\lambda=1}^D$ such that $\Pi_0$ is a separable operator, each $\Pi_\lambda$ is a product operator, and
\begin{align}
tr(\Pi_0\rho)=0&,\label{Eq:consE1F}\\
tr(\Pi_\lambda\rho\Pi_\lambda\sigma)=0&,\qquad\forall 1\leq \lambda\leq D\label{Eq:consE2F}\\
tr[\Pi_\lambda[(1-x)\rho-x\sigma]]=0&,\qquad\forall 1\leq\lambda\leq D\label{Eq:consE3F}
\end{align}  
where $D=\prod_{k=1}^N d_k^2 +1$ and $d_k$ is the dimension of system $k$.
\end{theorem}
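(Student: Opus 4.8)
The plan is to translate the operational hypothesis into a finite algebraic feasibility problem and then extract a solution at $\epsilon=0$ by a compactness argument. The one genuine obstacle is that $\aLOCC$ is not compact, so I cannot simply take a limit of the error-$\leq\epsilon$ LOCC POVMs and hope the limit is still LOCC. I would sidestep this by relaxing the LOCC-implementability requirement to the weaker but closed demand that the operators be separable or product, bounding their number via Carath\'eodory's Theorem so that everything lives in a fixed finite-dimensional space, and only then invoking compactness.

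First I would install the parameter $x$. Since the footnote shows that perfect asymptotic discriminability is independent of the prior probabilities, for each $x\in[1/2,1]$ I am free to set $p_\sigma=x$ and $p_\rho=1-x$, so that $p_\sigma\geq p_\rho$. For each $\epsilon>0$ there is then a finite-round, finite-outcome LOCC protocol $\mathcal{P}^{(\epsilon)}$ with error $\leq\epsilon$, and Proposition \ref{Prop:LOCCweak} lets me replace it by $\hat{\mathcal{P}}^{(\epsilon)}$ whose bias-flipped nodes all carry the balanced bias $(1/2,1/2)$. This normalization is precisely what makes the trace-norm estimate \eqref{Eq:PerrTrace} applicable at those nodes.

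Next I would split the leaves of the protocol tree into the set $\mc{B}_1$ of branches that guess before the bias ever flips and the set $\mc{B}_2$ of branches that reach a balanced node. On $\mc{B}_1$ the bias still favors $\sigma$, so the node error is $p_\rho\, tr(\Pi_\lambda\rho)$; summing gives $p_\rho\, tr(\Pi^{(\epsilon)}\rho)\leq\epsilon$ for the separable operator $\Pi^{(\epsilon)}:=\sum_{\mc{B}_1}\Pi_\lambda$, which is \eqref{Eq:consE1}. On $\mc{B}_2$, summing \eqref{Eq:PerrTrace} produces the square-root bound \eqref{Eq:consE2}, and the balance condition at each flipped node gives \eqref{Eq:consE3}. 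This yields a POVM $\{\Pi^{(\epsilon)},\Pi_\lambda\}_\lambda$ obeying \eqref{Eq:consE1}--\eqref{Eq:consE3}, though with an a priori unbounded number of product elements.

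Finally I would control the cardinality and pass to the limit. Normalizing to $\Pi_\lambda/tr[\Pi_\lambda]$, the operator $\tfrac{1}{\eta}(\mathbb{I}-\Pi^{(\epsilon)})$ lies in their convex hull inside the $\prod_k d_k^2$-dimensional space $Herm(\mc{H})$, so Carath\'eodory's Theorem lets me retain only $D=\prod_k d_k^2+1$ of them while preserving the identity $\mathbb{I}-\Pi^{(\epsilon)}=\sum_{\lambda=1}^D q_\lambda\,\Pi_\lambda/tr[\Pi_\lambda]$, with $\sum_\lambda q_\lambda=\eta\leq\prod_k d_k$. With the cardinality fixed I would collect all admissible data $\{q_\lambda\}\cup\{\Pi^{(\epsilon)},\Pi_\lambda\}$ solving \eqref{Eq:consE1}--\eqref{Eq:consE3} for some $\epsilon\in[0,p_\rho]$ into a set $\mc{Q}\subset\mathbb{R}^D\times Herm(\mc{H})^{\times(D+1)}$, which is closed and bounded, hence compact. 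Because $\mc{Q}$ is nonempty for every $\epsilon>0$, compactness forces an element realizing $\epsilon=0$. Reading off its equations finishes the argument: \eqref{Eq:consE1} becomes $tr(\Pi_0\rho)=0$ since $p_\rho>0$; the vanishing of the square-root sum forces each non-negative summand to vanish, i.e. $tr(\Pi_\lambda\rho\Pi_\lambda\sigma)=0$; and \eqref{Eq:consE3} with $p_\rho=1-x,\ p_\sigma=x$ becomes $tr[\Pi_\lambda((1-x)\rho-x\sigma)]=0$. These are exactly \eqref{Eq:consE1F}--\eqref{Eq:consE3F}, and letting $x$ range over $[1/2,1]$ supplies the stated family of POVMs.
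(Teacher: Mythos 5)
Your treatment of the interior points $x\in(1/2,1)$ reproduces the paper's proof essentially verbatim: the same reduction via Proposition \ref{Prop:LOCCweak}, the same partition of branches into $\mc{B}_1$ and $\mc{B}_2$ yielding \eqref{Eq:consE1}--\eqref{Eq:consE3}, the same Carath\'eodory bound $D=\prod_k d_k^2+1$, and the same compactness extraction at $\epsilon=0$. For those $x$ the argument is sound.

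However, your claim that you can handle the entire interval $[1/2,1]$ uniformly by setting $p_\sigma=x,\ p_\rho=1-x$ fails at the endpoint $x=1$, and this is a genuine gap. The footnote you invoke for prior-independence explicitly requires both priors to be nonzero (it divides by $p_\sigma$ and $p_\rho$), so it does not license the choice $p_\rho=0$. Worse, with $p_\rho=0$ the operational hypothesis becomes vacuous: the protocol that always guesses $\sigma$ already has weighted error zero, and both \eqref{Eq:consE1} and \eqref{Eq:consE2} hold identically since $p_\rho\, tr(\Pi^{(\epsilon)}\rho)=0$ and $4p_\rho p_\sigma\, tr(\Pi_\lambda\rho\Pi_\lambda\sigma)=0$ regardless of the operators involved. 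Consequently the compactness machinery can output, e.g., $\Pi_0=\mathbb{I}$ with all $q_\lambda=0$, which satisfies the $\epsilon=0$ constraints yet violates \eqref{Eq:consE1F}. Your own final step concedes the point: you derive $tr(\Pi_0\rho)=0$ from \eqref{Eq:consE1} ``since $p_\rho>0$,'' which is false exactly when $x=1$. This endpoint is not cosmetic --- the paper uses the $x=1$ case to conclude that $supp(\rho)$ and $supp(\sigma)$ must each contain a product basis, which feeds into the subsequent lemma. The paper closes the boundary separately: $x=1/2$ is trivially witnessed by the identity (note that your protocol-tree analysis also degenerates there, since it presupposes $p_\sigma>p_\rho$ at the root), while for $x=1$ one observes that $\aLOCC$-distinguishability implies SEP-distinguishability, takes a separable two-outcome POVM $\{\Pi_\rho,\Pi_\sigma\}$ with $tr[\Pi_\rho\sigma]=tr[\Pi_\sigma\rho]=0$ and $\Pi_\rho+\Pi_\sigma=\mathbb{I}$, sets $\Pi_0=\Pi_\sigma$, and decomposes $\Pi_\rho$ into at most $D$ product operators by one more application of Carath\'eodory's Theorem. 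Appending this boundary argument would complete your proof.
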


In the above proof, we began by assuming that $p_\sigma>p_\rho$, and this corresponds to the choice $x\in(1/2,1)$.  The boundary point $x=1/2$ can be trivially satisfied by the identity.  For $x=1$,  note that distinguishability by $\aLOCC$ implies distinguishability by SEP.  Hence, there must exist two separable operators $\Pi_\rho$ and $\Pi_\sigma$ such that $tr[\Pi_\rho\sigma]=0$, $tr[\Pi_\sigma\rho]=0$, and $\Pi_\rho+\Pi_\sigma=\mathbb{I}$.  Setting $\Pi_\sigma=\Pi_0$ and decomposing $\Pi_\rho$ into a convex sum of product operators $\Pi_\lambda$ provides the necessary ingredients to satisfy Theorem \ref{thm:OurKKB} for $x=1$.

The case of $x=1$ allows us to immediately draw conclusions about states $\rho$ and $\sigma$ whose supports cover the full state space.  Namely, when $\mc{H}=supp(\rho)\oplus supp(\sigma)$, from the discussion of the previous paragraph it follows that distinguishability by $\aLOCC$ requires $supp(\rho)$ and $supp(\sigma)$ to each possess a product state basis.  We can apply this observation to the LOCC distinguishability norm of certain hermitian matrices having full rank.  Recall that $||M||_{1}=||M||_{\Omega}$ for some set of POVMs $\Omega$ iff the orthogonal ensemble $\{(p_\rho, \rho),(p_\sigma,\sigma)\}$ can be perfectly distinguished by some POVM in $\overline{\Omega}$, where $M=p_\rho \rho-p_\sigma\sigma$.  We refer to the positive (resp. negative) eigenspace of an operator as the subspace spanned by eigenvectors whose corresponding eigenvalues are positive (resp. negative).
\begin{lemma}
Let $M$ be a full rank $d_A\otimes d_B$ hermitian operator possessing either a positive or negative eigenspace of dimension two.  Then $||M||_1=||M||_{LOCC}$ iff an orthonormal product basis exists for both the positive and negative eigenspaces.
\end{lemma}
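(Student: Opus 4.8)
The plan is to use the equivalence recalled just before the statement: $||M||_1=||M||_{LOCC}$ holds iff the orthogonal ensemble $\{(p_\rho,\rho),(p_\sigma,\sigma)\}$ with $M=p_\rho\rho-p_\sigma\sigma$ can be perfectly distinguished by $\aLOCC$, where $supp(\rho)$ and $supp(\sigma)$ are the positive and negative eigenspaces of $M$. Since $M$ is full rank, $\mathcal{H}=supp(\rho)\oplus supp(\sigma)$, and I will assume without loss of generality that the two-dimensional eigenspace is $supp(\rho)$, the other case being symmetric under interchanging $\rho$ and $\sigma$. Everything then reduces to one geometric fact about two-dimensional subspaces, together with the observation that $\aLOCC$-distinguishability forces the relevant projectors to be separable.

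The key geometric claim I will establish is: a two-dimensional subspace $V\subseteq\mathcal{H}$ admits an orthonormal product basis iff its projector $P_V$ is a separable operator. The forward direction is immediate. For the converse, write $P_V=\sum_\mu\op{u_\mu}{u_\mu}$ with each $\ket{u_\mu}$ a scaled product vector; since $\operatorname{range}(P_V)=V$, these product vectors lie in $V$ and span it, so $V$ contains two linearly independent product states $\ket{a_1b_1},\ket{a_2b_2}$. I will then analyze all product vectors of $V$ elementarily: if $\ket{a_1}\parallel\ket{a_2}$ or $\ket{b_1}\parallel\ket{b_2}$ then $V=\ket{a}\otimes W_B$ or $W_A\otimes\ket{b}$ consists entirely of product vectors and trivially has an orthonormal product basis; otherwise the coefficient matrix of $\alpha\ket{a_1b_1}+\beta\ket{a_2b_2}$ has rank two whenever $\alpha,\beta\neq0$, so the only product vectors in $V$ are $\ket{a_1b_1}$ and $\ket{a_2b_2}$. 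In this last case every $\ket{u_\mu}$ is proportional to one of these two, forcing $P_V=c_1\op{a_1b_1}{a_1b_1}+c_2\op{a_2b_2}{a_2b_2}$ with $c_1,c_2>0$; evaluating $P_V\ket{a_1b_1}=\ket{a_1b_1}$ and using linear independence then yields $\ip{a_2b_2}{a_1b_1}=0$, an orthonormal product basis.

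I also need a complement lemma: if a two-dimensional $V$ has an orthonormal product basis $\{\ket{a_1b_1},\ket{a_2b_2}\}$, then $V^\perp$ does too. Orthonormality gives $\ip{a_1}{a_2}\ip{b_1}{b_2}=0$, say $\ip{a_1}{a_2}=0$; extending $\{\ket{a_1},\ket{a_2}\}$ to an orthonormal basis of $\mathbb{C}^{d_A}$ and $\ket{b_1},\ket{b_2}$ to orthonormal bases of $\mathbb{C}^{d_B}$, I will exhibit $d_Ad_B-2$ explicit orthonormal product states in $V^\perp$ (the vectors $\ket{a_1}\otimes\ket{b_1}^\perp$, $\ket{a_2}\otimes\ket{b_2}^\perp$, and $\ket{a_i}\otimes(\text{any basis})$ for $i\geq3$), which by a dimension count form an orthonormal product basis of $V^\perp$. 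Consequently the lemma's right-hand condition—an orthonormal product basis for \emph{both} eigenspaces—is equivalent to the two-dimensional eigenspace alone having one.

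With these in hand the two directions are short. For necessity, $\aLOCC$-distinguishability implies SEP-distinguishability, so there exist separable $\Pi_\rho,\Pi_\sigma$ with $\Pi_\rho+\Pi_\sigma=\mathbb{I}$ and $\tr(\Pi_\rho\sigma)=\tr(\Pi_\sigma\rho)=0$; positivity together with the orthogonal support decomposition forces $\Pi_\rho$ and $\Pi_\sigma$ to be the projectors onto $supp(\rho)$ and $supp(\sigma)$, so in particular the projector onto the two-dimensional eigenspace is separable, and the geometric claim gives its orthonormal product basis (hence both, by the complement lemma). For sufficiency, the complement lemma produces an orthonormal product basis of all of $\mathcal{H}$ respecting the splitting, and—using $\ip{a_1}{a_2}=0$—I will write down a one-way protocol in which party $A$ measures in the basis $\{\ket{a_i}\}$ and party $B$ finishes: every outcome perfectly reveals whether the state lies in $supp(\rho)$ or $supp(\sigma)$, giving $||M||_1=||M||_{LOCC}$ (indeed $=||M||_{1-\text{LOCC}}$). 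The main obstacle is the geometric claim: controlling all product vectors of a two-dimensional subspace and ruling out an entangled second basis vector when $P_V$ is separable is the crux, and it is precisely here that the dimension-two hypothesis is indispensable.
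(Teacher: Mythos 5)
Your proposal is correct, and it reaches the conclusion by a genuinely different route than the paper at the crucial necessity step. Both proofs begin identically: reduce $||M||_1=||M||_{\mathrm{LOCC}}$ to perfect discrimination of the eigenspace-supported ensemble by $\aLOCC$, and then pass to SEP via closedness. From there the paper extracts only the weaker consequence (from the $x=1$ case of Theorem \ref{thm:OurKKB}) that each support must \emph{contain} a product-state basis; since a spanning set of product vectors need not be orthogonal, the paper must then rule out the non-orthogonal case by a contradiction argument---projecting everything into the $2\otimes 2$ subspace spanned by the two product states and invoking the external result of Ref.~\cite{Chitambar-2013b} on SEP discrimination of rank-two operators. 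You instead squeeze more out of the SEP POVM: the trace conditions $\tr(\Pi_\rho\sigma)=\tr(\Pi_\sigma\rho)=0$, positivity, completeness, and full rank force $\Pi_\rho,\Pi_\sigma$ to be \emph{exactly} the support projectors, so separability of the rank-two projector itself becomes the operative hypothesis; your self-contained geometric claim (a two-dimensional projector is separable iff the subspace has an orthonormal product basis, proved by classifying all product vectors in the subspace and matching coefficients in $P_V\ket{a_1b_1}=\ket{a_1b_1}$) then finishes the job with no citation. You also make explicit the complement lemma---that a two-dimensional subspace with an orthonormal product basis has an orthogonal complement with one too---which the paper only asserts parenthetically (``and therefore also $supp(S)$''); your explicit basis count is a genuine improvement in rigor there. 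The sufficiency directions are essentially the same one-way protocol (Alice measures in the orthonormal basis containing $\ket{a_1},\ket{a_2}$, Bob finishes). In short: the paper's proof is shorter by leaning on its Theorem \ref{thm:OurKKB} and prior work, while yours is elementary and self-contained, and as a bonus identifies the unique zero-error separable POVM.
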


\begin{proof}
Let $M=R-S$ be the orthogonal decomposition of $M$ where $rk(R)=2$ and $R,S\geq 0$.  Assume that $R$ and $S$ can be perfectly distinguished by $\aLOCC$ with the POVM $\{\Pi_R,\Pi_S\}$.  From the above observation, we have that both $supp(R)$ and $supp(S)$ must contain a product state basis.  If $supp(R)$ is a tensor product space (i.e. of the form $\mathbb{C}^1\otimes \mathbb{C}^2$), then $supp(R)$ obviously has an orthonormal product basis as well as $supp(S)$.  Suppose then that $supp(R)$ is not a tensor product space.  Then since any such two-dimensional subspace can possess at most two product states, the product state basis of $supp(R)$ will be unique.  Now suppose that the product states in $supp(R)$ are not orthogonal and given by $\ket{0}\otimes\ket{0}$ and $(\alpha \ket{0}+\ket{1})\otimes(\beta\ket{0}+\ket{1})$ with $\alpha,\beta\not=0$.  Let $\Pi_0$ denote the projection onto the subspace spanned by $\{\ket{i}\otimes\ket{j}\}_{i,j=0}^1$.  Note that $\Pi_0 R\Pi_0=R$ and likewise $\Pi_0 \Pi_R\Pi_0=\Pi_R$ since $tr[\Pi_R R]=1$ and $tr[\Pi_R S]=0$.  This implies that the $2\otimes 2$ POVM $\{\Pi_R, \Pi_0\Pi_S\Pi_0\}$ perfectly distinguishes $R$ and $\Pi_0 S\Pi_0$.  The POVM $\{\Pi_R, \Pi_0\Pi_S\Pi_0\}$ belongs to SEP, and the conditions for distinguishing the rank-two elements $R$ and $\Pi_0 S \Pi_0$ by SEP is that $supp(R)$ contains an orthogonal product basis \cite{Chitambar-2013b}.  This is a contradiction.

On the other hand, suppose that $supp(R)$ (and therefore also $supp(S)$) contains an orthogonal product basis of the form $\ket{00}$ and $\ket{1 \beta}$ (without loss of generality). Then Alice measures in the computational basis.  If she obtains outcome $\ket{0}$, Bob projects into any orthogonal basis containing $\ket{0}$, and they choose state $R$ iff he obtains $\ket{0}$.  If she obtains outcome $\ket{1}$, Bob projects into any orthogonal basis containing $\ket{\beta}$ and they choose $R$ iff he obtains $\ket{\beta}$.
\end{proof}

\section{Separating SEP and LOCC Norms}

We will next apply Theorem \ref{thm:OurKKB} in a straightforward manner to show a gap between the SEP and LOCC norms.  To achieve this, we will return to the original paper that demonstrated the subtle difference between separability and locality in terms of distinguishability  \cite{Bennett-1999a}.  The authors presented nine orthogonal product states spanning the full $3\otimes 3$ state space:
\begin{align}
\ket{\psi_0}&=\ket{1}\otimes\ket{1},\notag\\
\ket{\psi_{1\pm}}&=\ket{0}\otimes\ket{0\pm 1},&\ket{\psi_{2\pm}}&=\ket{0\pm 1}\otimes\ket{2},\notag\\
\ket{\psi_{3\pm}}&=\ket{1\pm 2}\otimes\ket{0},&\ket{\psi_{4\pm}}&=\ket{2}\otimes\ket{1\pm 2}.
\end{align}
It was shown that as a nine-state ensemble, $\aLOCC$ is unable to perfectly identify a given state while SEP, in contrast, can achieve the task.  Here, we wish to prove the stronger result that the following mixtures cannot be perfectly distinguished by $\aLOCC$:
\begin{align}
\sigma&=\frac{1}{4}\sum_{i=1}^4\op{\psi_{i+}}{\psi_{i+}},\notag\\
\rho&=\frac{1}{5}(\op{\psi_0}{\psi_0}+\sum_{i=1}^4\op{\psi_{i-}}{\psi_{i-}}).
\end{align}

Let us show that these states cannot be discriminated by $\aLOCC$.  Eq. \eqref{Eq:consE2F} necessitates that each $\Pi_\lambda=A_\lambda\otimes B_\lambda$ acts invariantly on $supp(\sigma)$.  The key observation is that the $\ket{\psi_{i+}}$ are the unique product states lying in $supp(\sigma)$.  To see this, note that any product state in $supp(\sigma)$ can be represented as a matrix $\sum_{i=1}^4\alpha_{i}\op{\psi_{i+}}{\psi_{i+}}$ whose $2\times 2$ minors vanish; this requires all but one $\alpha_i$ to be zero.  As a result of this property, up to overall non-negative scalars, $A_\lambda\otimes B_\lambda$ must map the set of four states
\begin{align}
\ket{\psi_{1+}}&=\ket{0}\otimes \ket{0+1} ,&\ket{\psi_{2+}}&=\ket{0+1}\otimes \ket{2},\notag\\
\ket{\psi_{3+}}&=\ket{1+2}\otimes \ket{0},&\ket{\psi_{4+}}&=\ket{2}\otimes\ket{1+2}\notag
\end{align}
onto itself.  Now the action $M\ket{i}=c\ket{j+k}$ is not possible for any $M\geq 0$ when $i\not=j,k$ and $c\not=0$.  Applying this fact to both $A_\lambda$ and $B_\lambda$ implies that $A_\lambda\otimes B_\lambda\ket{\psi_{i+}}=c_{i+}\ket{\psi_{i+}}$ where $c_{i+}\geq 0$.  

Suppose now that there are two states $\ket{\psi_{i_1+}}=\ket{\alpha_{i_1}}\ket{\beta_{i_1}}$ and $\ket{\psi_{i_2+}}=\ket{\alpha_{i_2}}\ket{\beta_{i_2}}$ for which both $c_{i_1+},c_{i_2+}>0$.  We can always find another state $\ket{\psi_{i_3+}}=\ket{\alpha_{i_3}}\ket{\beta_{i_3}}$ with $i_3\not=i_1,i_2$ such that both $\ip{\alpha_{i_1}}{\alpha_{i_3}}\not=0$ and $\ip{\beta_{i_2}}{\beta_{i_3}}\not=0$.  But since the eigenspaces of $A_\lambda$ (resp. $B_\lambda$) are orthogonal, it follows that $A_\lambda\otimes B_\lambda\ket{\alpha_{i_3}}\ket{\beta_{i_3}}\not=0$; for if $\ket{\alpha_{i_3}}$ (resp. $\ket{\beta_{i_3}}$) were to be in the kernel of $A_\lambda$ (resp. $B_\lambda$), then a non-orthogonal state would also lie in the support of $A_\lambda$ (resp. $B_\lambda$).  Now with $c_{i_1+},c_{i_2+},c_{i_3+}>0$ and because the local parts of any three $\ket{\psi_{i+}}$ are linearly independent, $A_\lambda$ and $B_\lambda$ are necessarily full rank.  In this case, both $A_\lambda$ and $B_\lambda$ will have eigenstates $\{\ket{0},\ket{0+1},\ket{1+2},\ket{2}\}$, which, by a simple calculation, can be seen as possible only if both $A_\lambda$ and $B_\lambda$ are proportional to the identity.  

We have just shown that if $A_\lambda\otimes B_\lambda$ is not proportional to the identity, then $A_\lambda\otimes B_\lambda$ must eliminate at least three of the $\ket{\psi_{i+}}$.  However, by taking $x\in(1/2,1)$ in Eq. \eqref{Eq:consE3F}, two conditions are ensured: $A_\lambda\otimes B_\lambda$ is not proportional to the identity (since $x\not=1/2$), and $c_{i+}$ is nonzero for at least one value of $i$ (since $x\not=1$).  Hence, for $x$ in this interval, $A_\lambda\otimes B_\lambda$ must eliminate three and only three of the $\ket{\psi_{i+}}$.  By again using the fact that Alice and Bob's parts are linearly independent for any three of the $\ket{\psi_{i+}}$, this is possible only if $A_\lambda\otimes B_\lambda$ is rank two and having the form
\begin{equation}
\label{Eq:DistinguishPOVM}
A_\lambda\otimes B_\lambda=c_{i+}\op{\psi_{i+}}{\psi_{i+}}+c_{i-}\op{\psi_{i-}}{\psi_{i-}}
\end{equation}
with $c_{i+},c_{i-}>0$.  The fact that $c_{i-}>0$ again follows from Eq. \eqref{Eq:consE3F}.  In summary, each $\Pi_\lambda$ has support on a two-dimensional space spanned by $\{\ket{\psi_{i+}},\ket{\psi_{i-}}\}$ for some $i\in\{1,2,3,4\}$.  But then discrimination of $\sigma$ and $\rho$ is impossible by $\aLOCC$ since Eq. \eqref{Eq:consE1F} together with the fact that $\Pi_0+\sum_\lambda\Pi_\lambda=\mathbb{I}$ implies that $supp(\rho)\subset \bigcup_\lambda supp(\Pi_\lambda)$.  But $\ket{1}\otimes\ket{1}\in supp(\rho)$ will not be contained in $\bigcup_\lambda supp(\Pi_\lambda)$.

We have thus demonstrated a gap between the distinguishability norms:
\begin{equation}
||\rho-\sigma||_{\text{SEP}}>||\rho-\sigma||_{\text{LOCC}}.
\end{equation}
This relatively simple argument can be applied to more ensembles with the same type of structure (see Ref. \cite{DiVincenzo-2003a} for such ensembles).

The example constructed in this section demonstrates the difference between Theorem \ref{thm:OurKKB} and the distinguishability criterion of Ref. \cite{Kleinmann-2011a}.  The criterion given in Proposition 1 of Ref. \cite{Kleinmann-2011a} will not show the impossibility of discriminating $\rho$ and $\sigma$ by $\aLOCC$; indeed, a product operator of the form given in Eq. \eqref{Eq:DistinguishPOVM} will satisfy that distinguishability criterion.  The essential component of Theorem \ref{thm:OurKKB} used to eliminate the possibility of $\aLOCC$ discrimination is that the collective supports of $\Pi_\lambda$ must cover the full support of $\rho$.

\section{Separating One-Way and Two-Way LOCC Norms}
\label{Sect:1-wayVs2-way}

By the ``two-way'' LOCC norm, we are referring to the general LOCC norm.  To show a gap between $||\cdot||_{\text{LOCC}}$ and $||\cdot||_{1-\text{LOCC}}$,  the example ensemble we consider consists of the equiprobable two-qubit states
\begin{align}
\label{Eq:Koashi}
\psi&=\op{00}{00}&\rho&=1/2(\op{++}{++}+\op{--}{--}),
\end{align}
where $\ket{\pm}=\tfrac{1}{\sqrt{2}}(\ket{0}\pm\ket{1})$.  Distinguishability of this ensemble was analyzed by Koashi \textit{et al.} within the context of \textit{unambiguous} discrimination \cite{Koashi-2007a}.  However, for the task of minimum error discrimination, this simple-structure ensemble has only been in studied in Ref. \cite{Chitambar-2013b} where it was shown that separable operators are able to obtain the global minimum error probability, while this rate cannot be achieved in finite rounds of LOCC.  In terms of norms, we can summarize this result as
\begin{equation}
||\psi-\rho||_1=||\psi-\rho||_{\text{SEP}}>||\psi-\rho||_{r-\text{LOCC}},
\end{equation} 
where $r\in\mathbb{Z}_+$ and $r-\text{LOCC}$ denotes the set of $r$-round two-outcome LOCC POVMs (acting on two qubits).  We emphasize that despite the previous unambiguous discrimination analysis conducted by Koashi \textit{et al.} on ensemble \eqref{Eq:Koashi}, it is a completely different problem to consider the minimum error discrimination of these states.  Indeed, we have recently observed that SEP and LOCC are equally powerful for distinguishing certain states when optimal unambiguous discrimination is the figure of merit, but SEP and LOCC are inequivalent for the same states when minimum error discrimination is taken as the figure of merit \cite{Chitambar-2013a,Chitambar-2013b}.

The goal at hand is to prove that $||\psi-\rho||_{\text{LOCC}}>||\psi-\rho||_{1-\text{LOCC}}$.  To do so we will first compute the minimum one-way LOCC error probability; then we will construct a specific two-way protocol that beats it.  For both parts, we will need the explicit formula for Helstrom's bound when distinguishing one pure qubit state from one mixed state.  This is given by computing $||p_0\op{\psi_0}{\psi_0}-(p_1\op{\psi_1}{\psi_1}+p_2\op{\psi_2}{\psi_2})||_1$ for qubit pure states $\{\ket{\psi_i}\}_{i=0}^2$ and $\sum_{i=0}^2p_i=1$.  It is relatively straightforward to make this calculation (see Ref. \cite{Chitambar-2013a} for details); one finds the following.
\begin{lemma}
\label{Lem:pure-mixed}
Consider the weighted states $p_0\op{\psi_0}{\psi_0}$ and $p_1\op{\psi_1}{\psi_1}+p_2\op{\psi_2}{\psi_2}$.  The minimum error probability is
\begin{align}
\label{Eq:prob}
\tfrac{1}{2}-\tfrac{1}{2}\sqrt{|p_0-p_1-p_2|^2-4\det\Delta}
\end{align}
if $\det\Delta\leq 0$, and $\tfrac{1}{2}-\tfrac{1}{2}|p_0-p_1-p_2|$ if $\det\Delta\geq 0$, where  $\Delta:=p_0\op{\psi_0}{\psi_0}-p_1\op{\psi_1}{\psi_1}-p_2\op{\psi_2}{\psi_2}$ and 
\begin{equation}
\det \Delta=p_1p_2(1-|\ip{\psi_1}{\psi_2}|^2)-p_0p_1(1-|\ip{\psi_0}{\psi_1}|^2)-p_0p_2(1-|\ip{\psi_0}{\psi_2}|^2).
\end{equation}
\end{lemma}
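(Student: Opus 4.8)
The plan is to invoke the Helstrom bound, which was recalled in the introduction: for the weighted ensemble the global minimum error probability equals $\tfrac{1}{2}(1-||\Delta||_1)$, with $\Delta=p_0\op{\psi_0}{\psi_0}-p_1\op{\psi_1}{\psi_1}-p_2\op{\psi_2}{\psi_2}$. So the whole task reduces to evaluating the trace norm of $\Delta$ and then verifying the stated closed form for $\det\Delta$. First I would observe that because $\ket{\psi_0},\ket{\psi_1},\ket{\psi_2}$ are all qubit states, $\Delta$ acts on $\mathbb{C}^2$ and is therefore a $2\times 2$ Hermitian matrix with two real eigenvalues $\lambda_\pm$. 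Its trace and determinant are $\tr\Delta=p_0-p_1-p_2$ and $\det\Delta=\lambda_+\lambda_-$, while the trace norm is $||\Delta||_1=|\lambda_+|+|\lambda_-|$.

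Next I would split into two cases according to the sign of $\det\Delta=\lambda_+\lambda_-$. When $\det\Delta\le 0$ the two eigenvalues have opposite signs (or one vanishes), so $|\lambda_+|+|\lambda_-|=|\lambda_+-\lambda_-|=\sqrt{(\tr\Delta)^2-4\det\Delta}=\sqrt{|p_0-p_1-p_2|^2-4\det\Delta}$, which upon substitution into $\tfrac{1}{2}(1-||\Delta||_1)$ yields the first stated expression. When $\det\Delta\ge 0$ the eigenvalues share a sign, so $|\lambda_+|+|\lambda_-|=|\lambda_++\lambda_-|=|\tr\Delta|=|p_0-p_1-p_2|$, giving the second expression. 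Note the two branches agree at the boundary $\det\Delta=0$, so the formula is well defined there.

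It then remains to verify the closed form for $\det\Delta$, and here I would exploit the fact that the determinant is a quadratic form on $2\times 2$ matrices. Writing $\Delta=\sum_i c_i\op{\psi_i}{\psi_i}$ with $(c_0,c_1,c_2)=(p_0,-p_1,-p_2)$ and using the polarization identity $\det(X+Y)=\det X+\det Y+\tr X\,\tr Y-\tr(XY)$, the associated symmetric bilinear form produces only pairwise cross terms, so $\det\Delta=\sum_{i<j}c_ic_j\bigl[\tr(\op{\psi_i}{\psi_i})\tr(\op{\psi_j}{\psi_j})-\tr(\op{\psi_i}{\psi_i}\op{\psi_j}{\psi_j})\bigr]$, the diagonal terms dropping because each rank-one projector has zero determinant. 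Inserting $\tr\op{\psi_i}{\psi_i}=1$ and $\tr(\op{\psi_i}{\psi_i}\op{\psi_j}{\psi_j})=|\ip{\psi_i}{\psi_j}|^2$ and reading off the three coefficients $c_0c_1=-p_0p_1$, $c_0c_2=-p_0p_2$, $c_1c_2=p_1p_2$ reproduces the claimed expression term by term.

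The computation is elementary throughout, so there is no genuine technical obstacle; the only points demanding care are the correct sign bookkeeping of the coefficients $c_i$ in the determinant identity and the case split itself, where one must confirm that the opposite-sign branch is precisely the regime $\det\Delta\le 0$ and that both formulas coincide on $\det\Delta=0$.
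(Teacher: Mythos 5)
Your proof is correct, and it is exactly the ``relatively straightforward calculation'' the paper gestures at: the paper gives no in-text proof of this lemma, deferring instead to Ref.~[Chitambar-2013a], and the intended argument is precisely yours --- apply the Helstrom bound $\tfrac{1}{2}(1-\norm{\Delta}_1)$ and evaluate the trace norm of the $2\times 2$ Hermitian matrix $\Delta$ through its trace and determinant, splitting on the sign of $\det\Delta$. Your derivation of the closed form for $\det\Delta$ via the polarization identity $\det(X+Y)=\det X+\det Y+\tr X\tr Y-\tr(XY)$, with the rank-one diagonal terms vanishing, is a clean and correct way to finish.
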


\medskip

\noindent \textbf{One-Way Optimal:}  Consider a one-way measurement scheme that is optimal in the minimum error sense.  Without loss of generality, we consider communication from Alice to Bob with Alice's measurement consisting of rank-one POVMs.  The joint measurement is then given by $\{\op{a_\lambda}{a_\lambda}\otimes B_{\mu|\lambda}\}$ where $\{B_{\mu|\lambda}\}_{\mu=0}^1$ is a conditional POVM performed by Bob.  Note that we are dealing with a real ensemble with each state being $(\sigma_z\otimes \sigma_z)$-invariant.  Consequently, we can simplify the structure of Alice's POVM according to the following proposition.
\begin{proposition}
\label{Prop:AlicePOVM}
An optimal one-way LOCC scheme consists of Alice's POVM being
\[\{p_0\op{\phi_0}{\phi_0},p_0\sigma_z\op{\phi_0}{\phi_0}\sigma_z,p_1\op{\phi_1}{\phi_1},p_1\sigma_z\op{\phi_1}{\phi_1}\sigma_z,\}\]
where $\ket{\phi_i}=\cos\phi_i/2\ket{0}+\sin\phi_i/2\ket{1}$. 
\end{proposition}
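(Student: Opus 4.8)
The plan is to whittle Alice's measurement down to the stated four-element form by successively exploiting the two symmetries of ensemble \eqref{Eq:Koashi}, and then to invoke Carath\'eodory's theorem to cap the number of distinct measurement directions. Since we may already take Alice's POVM to consist of rank-one elements $\{w_k\op{a_k}{a_k}\}$, I would first use the $(\sigma_z\otimes\sigma_z)$-invariance of both $\psi$ and $\rho$: if a one-way protocol with Alice-POVM $\{E_k\}$ is optimal, then conjugating every operator by $\sigma_z\otimes\sigma_z$ gives an equally optimal protocol, and running the two with equal probability yields an optimal protocol whose Alice-POVM is the symmetrized collection $\{\tfrac12 E_k,\tfrac12\sigma_z E_k\sigma_z\}$. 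Thus without loss of generality Alice's elements come in equally weighted pairs $\{w\op{a}{a},\,w\sigma_z\op{a}{a}\sigma_z\}$. The bookkeeping point I would flag explicitly is that the computational-basis off-diagonal entries of $\op{a}{a}$ and $\sigma_z\op{a}{a}\sigma_z$ are negatives of one another, so each pair contributes nothing off-diagonal to the completeness relation $\sum_k E_k=\mathbb{I}$; normalization then constrains only the diagonal populations.

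Next I would make the vectors real. After Alice reports an outcome, Bob must discriminate the pure state $\ket{0}$ from a mixture of $\ket{+}$ and $\ket{-}$, and by Lemma~\ref{Lem:pure-mixed} the branch success depends on that outcome only through the populations $\braket{0}{E}{0}$, $\braket{+}{E}{+}$, $\braket{-}{E}{-}$. Writing a pair vector as $\cos(\theta/2)\ket{0}+e^{i\varphi}\sin(\theta/2)\ket{1}$, these populations are governed by $\cos\theta$ and $\sin\theta\cos\varphi$, and computing $\det\Delta$ from Lemma~\ref{Lem:pure-mixed} shows the branch success is non-decreasing as $|\sin\theta\cos\varphi|$ grows with $\cos\theta$ held fixed. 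Because rotating $\varphi\to 0$ within a $\sigma_z$-pair leaves the diagonal populations untouched and, by the cancellation just noted, does not disturb completeness, I can send $\varphi\to0$ pair by pair without lowering the success probability. Hence Alice's POVM may be taken as $\{w_j\op{\phi_j}{\phi_j},\,w_j\sigma_z\op{\phi_j}{\phi_j}\sigma_z\}$ with each $\ket{\phi_j}=\cos(\phi_j/2)\ket{0}+\sin(\phi_j/2)\ket{1}$ real.

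Finally I would bound the number of directions. Using $\op{\phi}{\phi}+\sigma_z\op{\phi}{\phi}\sigma_z=\mathrm{diag}(1+\cos\phi,\,1-\cos\phi)$, completeness collapses to the two linear constraints $\sum_j w_j=1$ and $\sum_j w_j\cos\phi_j=0$, while the total success becomes $\sum_j w_j\, g(\cos\phi_j)$ for a function $g$ that, by the population computation above, depends on $\phi_j$ only through $x_j:=\cos\phi_j$. Viewing the feasible normalized measures as points of the planar convex hull of the curve $x\mapsto(x,g(x))$ on $x\in[-1,1]$, maximizing the success at fixed mean $\sum_j w_j x_j=0$ is the maximization of one coordinate over this convex hull at a fixed value of the other; by Carath\'eodory's theorem the optimum is a combination of at most two points of the curve. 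This gives two angles $\phi_0,\phi_1$ with per-element weights $p_0,p_1$ obeying $p_0+p_1=1$ and $p_0\cos\phi_0+p_1\cos\phi_1=0$, which is exactly the claimed form.

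The main obstacle I anticipate is the reality step: one cannot simply replace each complex vector by a real one, since doing so independently would spoil completeness. The workaround is precisely the pairing from the first step, which forces the off-diagonal contributions to cancel automatically and frees the phase to be rotated away. A secondary point demanding care is verifying, through the case split $\det\Delta\gtrless0$ in Lemma~\ref{Lem:pure-mixed}, that the branch success is genuinely monotone in $|\sin\theta\cos\varphi|$ and a function of $\cos\phi$ alone, since both facts are what let the final step reduce to a one-parameter planar moment problem.
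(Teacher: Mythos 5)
Your proof is correct, and it reaches the proposition by a genuinely different route in the two substantive steps. For the reality of Alice's vectors, the paper's Appendix A simply cites the symmetry argument of Sasaki \textit{et al.}; you prove it, using the fact that within a $\sigma_z$-pair the sum $\op{a}{a}+\sigma_z\op{a}{a}\sigma_z$ is diagonal and phase-independent, together with monotonicity of the branch success in $|\sin\theta\cos\varphi|$. That monotonicity does hold: writing $s=\sin\theta\cos\varphi$ and $w$ for the element's weight, the unnormalized branch error is $\tfrac{w}{8}\bigl[(2+\cos\theta)-\sqrt{(1+\cos\theta)^2+s^2}\bigr]$ when $\det\Delta\le 0$ and is independent of $s$ when $\det\Delta\ge 0$, with the two formulas agreeing at $\det\Delta=0$. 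For the reduction to two angles, the paper fine-grains the real, paired POVM so that elements with $\cos\phi_\lambda\ge 0$ and $\cos\phi_\lambda<0$ are matched into quadruples, each proportional to the identity and hence to a four-outcome POVM of the claimed form; the total error is a weighted average of sub-POVM errors, so the best sub-POVM suffices. You instead collapse completeness to the moment constraints $\sum_j w_j=1$, $\sum_j w_j\cos\phi_j=0$ and optimize over the planar hull of the curve $x\mapsto(x,g(x))$. Both are valid; the paper's fine-graining is more constructive, while yours is self-contained on the reality step and cleaner as an optimization statement. One point to tighten: plain Carath\'eodory in $\mathbb{R}^2$ gives three points, not two. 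You need the boundary refinement: the optimum at fixed first moment lies on the boundary of the hull, a supporting line through it identifies a face, and Carath\'eodory within that line gives at most two points of the curve. With that fix, your weights and angles satisfy exactly $p_0+p_1=1$ and $p_0\cos\phi_0+p_1\cos\phi_1=0$, which is the claimed form.
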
  
\begin{proof}
Appendix A
\end{proof}
With this proposition, it suffices to consider only two measurement outcomes corresponding to POVM elements $q_0\op{\phi_0}{\phi_0}$ and $q_1\op{\phi_1}{\phi_1}$ for which
\begin{align}
q_0+q_1&=1,&q_0\cos\phi_0+q_1\cos\phi_1&=0.
\end{align}
The full POVM will then contain the $\sigma_z\otimes\sigma_z$-rotated elements as well, and the total average error probability will be given by $2P(\lambda=0)P(Err|\lambda=0)+2P(\lambda=1)P(Err|\lambda=1)$.   Using the relations
\begin{align}
|\ip{\phi_\lambda}{0}|^2&=\tfrac{q_\lambda}{2}(1+\cos\phi_\lambda),&|\ip{\phi_\lambda}{\pm}|^2&=\tfrac{q_\lambda}{2}(1\pm\sin\phi_\lambda),\notag
\end{align}
the unnormalized states that Bob must distinguish given outcome $\lambda\in\{0,1\}$ are 
\[\frac{p_{\lambda|0}}{2P(\lambda)}\op{0}{0},\qquad \frac{p_{\lambda|+}}{4P(\lambda)}\op{+}{+}+\frac{p_{\lambda|-}}{4P(\lambda)}\op{-}{-},\]
where
\begin{align}
p_{\lambda|0}&=\tfrac{q_\lambda}{2}(1+\cos\phi_\lambda), &p_{\lambda|\pm}&=\tfrac{q_\lambda}{2}(1\pm\sin\phi_\lambda),&P(\lambda)&=\tfrac{q_\lambda}{2}(1+\tfrac{\cos\phi_\lambda}{2}).
\end{align}
For outcome $\lambda$, a direct calculation gives
\begin{align}
\label{Eq:values}
|\frac{p_{\lambda|0}}{2P(\lambda)}-\frac{p_{\lambda|+}}{4P(\lambda)}-\frac{p_{\lambda|-}}{4P(\lambda)}|&=\frac{q_\lambda}{4P(\lambda)}|\cos\phi_\lambda|\notag\\
\det \Delta&=\left(\frac{q_\lambda}{8 P(\lambda)}\right)^2[(1-\cos\phi_\lambda)^2-3].
\end{align}
Since the average error is given by $2\sum_{\lambda=0}^1P(\lambda)P(Err|\lambda)$, using Lemma \ref{Lem:pure-mixed}, the error for when both outcomes satisfy $\det\Delta\leq 0$ is given by
\begin{align}
&2\sum_{\lambda=0}^1\left(\frac{P(\lambda)}{2}-\frac{P(\lambda)}{2}\sqrt{\left(\frac{q_\lambda}{4P(\lambda)}|\cos\phi_\lambda|\right)^2-4\left(\frac{q_\lambda}{8 P(\lambda)}\right)^2[(1-2\cos\phi_\lambda)]}\right)\notag\\
\label{Eq:detleq0}
&=\frac{1}{2}\left(1-\frac{1}{\sqrt{2}}[q_0\sqrt{1+\cos\phi_0}+q_1\sqrt{1+\cos\phi_1}]\right).
\end{align}
We want to minimize this under the constraints that $q_0+q_1=1$ and $q_0\cos\phi_0+q_1\cos\phi_1=0$.  Using concavity of the function $\sqrt{1+x}$, Eq. \eqref{Eq:detleq0} immediately gives a lower bound of $\frac{1}{2}\left(1-\frac{1}{\sqrt{2}}\right)$.  In fact, this lower bound is saturated by the choice of $q_0=q_1=1/2$, $\phi_0=\pi/2$ and $\phi_1=3\pi/2$.  This corresponds to Alice performing the projective measurement $\{\op{+}{+},\op{-}{-}\}$.

The only other possibility is if $\det\Delta>0$ for outcome $\lambda=0$, and $\det\Delta\leq 0$ for outcome $\lambda=1$.  In this case, the average error is given by
\begin{align}
\label{Eq:detgeq0}
2\times \left(\frac{1}{4}-\frac{q_0}{8}|\cos\phi_0|-\frac{q_1}{4\sqrt{2}}\sqrt{1+\cos\phi_1}\right)&=\frac{1}{2}\left(1-q_1(\frac{\cos\phi_1}{2}+\frac{\sqrt{1+\cos\phi_1}}{\sqrt{2}})\right)
\end{align}
where we have used the relation $q_0\cos\phi_0+q_1\cos\phi_1=0$ and the fact that $\cos\phi_0<0$.  This is minimized using a Lagrange multiplier, and the calculation is carried out in Appendix B.  One finds that two extrema exists for Alice's measurement: when she measures in the computational basis $\{\ket{0},\ket{1}\}$ and when she measures in the Hadamard basis $\{\ket{+},\ket{-}\}$.  Measuring in the computational basis leads to a smaller error probability.  It corresponds to choosing $q_1=1/2$ and $\cos\phi_1=1$ so that the the optimal one-way LOCC error probability is $1/8$.

\medskip

\noindent\textbf{Two-Way Improvement:}

Now we construct an improved two-way protocol.  We will track the evolution of the three-state ensemble 
\[\{(\tfrac{1}{2},\ket{00}); (\tfrac{1}{4},\ket{++});(\tfrac{1}{4},\ket{--})\},\]
keeping in mind that the actual problem includes a mixing over the last two states.  Suppose that Alice performs a two-outcome measurement with Kraus operators given by
\begin{align}
A_0&=\sqrt{1/2(1+p)}\op{0}{0}+\sqrt{1/2(1-p)}\op{1}{1}\notag\\
A_1&=\sqrt{1/2(1+p)}\op{1}{1}+\sqrt{1/2(1-p)}\op{0}{0}.
\end{align}
Here, $p$ parametrizes the strength of Alice's measurement with $p=0,1$ corresponding to the optimal one-way measurement described above.
First consider outcome $A_0$ that occurs with probability $P(A_0)=(2+p)/4$.  Alice broadcasts her result and the updated ensemble becomes
\[\{(\tfrac{1+p}{2+p},\ket{0 0}); (\tfrac{1}{2(2+p)},\ket{s_++});(\tfrac{1}{2(2+p)},\ket{s_--})\},\]
where $\ket{s_{\pm}}=\frac{1}{\sqrt{2}}[\sqrt{1+p}\ket{0}\pm\sqrt{1-p}\ket{1}]$.  Bob now pretends that Alice has completely eliminated the state $\ket{1}$, i.e. if she had chosen $p=1$.  That is, he performs optimal discrimination measurement for the ensemble $\{(2/3,\op{0}{0});(1/3,\mathbb{I}/2)\}$.  This amounts to measuring in the computational basis $\{B_0=\op{0}{0},B_1=\op{1}{1}\}$.  If he measures $B_1$ then the state $\rho$ will be perfectly identified.  On the other hand, outcome $B_0$ occurs with probability $P(B_0)=\tfrac{3+2p}{2(2+p)}$ and in that case the updated probabilities of Alice's ensemble are 
\begin{align}
P(0|B_0)&=\tfrac{1}{P(B_0)}\cdot\tfrac{1+p}{2+p}\notag\\
P(s_{\pm}|B_0)&=\tfrac{1}{2P(B_0)}\cdot\tfrac{1}{2(2+p)}\notag
\end{align}
so that she is left to distinguish the sub-normalized states
\begin{align}
\Psi&=\tfrac{1}{P(B_0)}\cdot\tfrac{1+p}{2+p}\op{0}{0},&\rho&=\tfrac{1}{P(B_0)}\cdot\tfrac{1}{4(2+p)}(\op{s_+}{s_+}+\op{s_-}{s_-}).
\end{align}
We use Lemma \ref{Lem:pure-mixed} to compute the optimal probability.  Note that $|\ip{s_{\pm}}{0}|^2=\tfrac{1+p}{2}$ and $|\ip{s_+}{s_-}|^2=p$.  We have
\begin{align}
P(B_0)\cdot|(P(0|B_0)-P(s_+|B_0)-P(s_-|B_0))|&=\frac{1+2p}{2(2+p)}
\end{align}
and 
\begin{align}
P(B_0)^2\det\Delta&=\left(\tfrac{1}{4(2+p)}\right)^2(1-p)-\tfrac{1+p}{2+p}\tfrac{1}{4(2+p)}(1-p)=-\frac{1-p}{4(2+p)}\frac{3+4p}{4(2+p)}<0.
\end{align}
Therefore, given outcomes $A_0$ and $B_0$, the optimal error probability is $\tfrac{1}{2}-\tfrac{\sqrt{4+5p}}{4(2+p)P(B_0)}$, and the weighted error for these outcomes is
\begin{align}
\label{Eq:Perr1}
P(A_0)P(B_0)P(Err|A_0B_0)&=\tfrac{2+p}{4}[\tfrac{3+2p}{4(2+p)}-\tfrac{\sqrt{4+5p}}{4(2+p)}]\notag\\
&=1/16[3+2p-\sqrt{4+5p}].
\end{align}

For the $A_1$ outcome of Alice's measurement, Bob again measures in the computational basis.  An analogous calculation gives the probability $P(A_1)P(B_0)P(Err|A_1B_0)=1/16[3-2p-\sqrt{4-3p}]$.  Therefore, the total error probability for the described protocol is
\begin{equation}
\sum_{\lambda=0}^1P(A_\lambda)P(B_0)P(Err|A_\lambda B_0)=1/16(6-\sqrt{4-3p}-\sqrt{4+5p}).
\end{equation}
The error probability is plotted in Fig. \ref{Fig:2-way-LOCC-new}.  For all values of $p$, this protocol performs at least well as the optimal 1-way LOCC measurement.  For comparison, we note that these probabilities are strictly greater than the minimal error probability obtainable by separable operations, which is $\tfrac{1}{8}(3-\sqrt{5})$ \cite{Chitambar-2013b}.

\begin{figure}[h]
\centering
\includegraphics[scale=0.6]{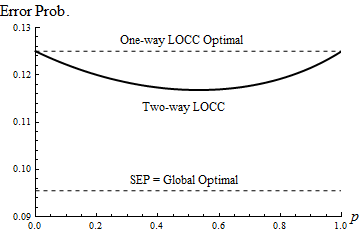}
\caption{\label{Fig:2-way-LOCC-new}
The minimum error probability of a two-way LOCC protocol as a function of $p$.  Note that when $p=0,1$ we obtain $1/8$, which is the smallest error probability using 1-way LOCC; for all other values of $p$ the error probability is strictly smaller.  The lower dotted line gives the optimal separable probability of $\tfrac{1}{8}(3-\sqrt{5})$.} 
\end{figure}

\section{Conclusion}

In conclusion, we have derived a general necessary condition for the optimal discrimination of two states by asymptotic LOCC.  This condition was used to demonstrate a gap between the separable and LOCC distinguishability norms.  We have also computed an explicit two-qubit ensemble whose two-way LOCC discrimination error is strictly less than when classical communication is limited to a single direction.  Theorem \ref{thm:OurKKB} is similar to Proposition 1 in Ref. \cite{Kleinmann-2011a} of Kleinmann \textit{et al.}, and we have stated Theorem \ref{thm:OurKKB} in an analogous manner.  The key difference is that, whereas Ref. \cite{Kleinmann-2011a} provides a necessary condition for just one product operator $\Pi_\lambda$, our theorem holds for a complete set of POVM elements.  It is the use of Carath\'{e}odory's Theorem that makes this possible.  Being able to place conditions on the full POVM is crucial for our particular demonstration of $||\rho-\sigma||_{\text{SEP}}>||\rho-\sigma||_{\text{LOCC}}$.

Could Theorem \ref{thm:OurKKB} also provide sufficient conditions for asymptotic LOCC discrimination between $\rho$ and $\sigma$?  While we currently have no example for when the theorem is insufficient, we suspect that such examples can be found.  Besides demanding that the supports of $\Pi_\lambda$ collectively contain the support of $\rho$, there is no additional condition that relates the individual $\Pi_\lambda$, and the latter seems essential to fully capture the LOCC nature of the measurement process.  On the other hand, it is not clear how Theorem \ref{thm:OurKKB} could be significantly strengthened.  The reason is that any LOCC protocol could always be modified in an infinite number of ways using weak measurement decompositions along the lines of Proposition \ref{Prop:LOCCweak}.  It is hard to see what further constraints could be placed on the $\Pi_\lambda$ (such as orthogonality) from which an exception could not be found among the various modifications.

Finally, we comment on the finding that $||\psi-\rho||_{\text{LOCC}}>||\psi-\rho||_{1-\text{LOCC}}$.  To our knowledge, it was previously unknown whether or not two-way adaptive LOCC protocols actually offer an advantage in the binary discrimination problem.  In fact, the data hiding examples studied in Refs. \cite{Terhal-2001a, DiVincenzo-2002a} have an optimal LOCC discrimination strategy using one-way communication \cite{Matthews-2009a}.  Furthermore, for any two pure states, optimal LOCC discrimination only requires one-way communication, while for any two full-rank mixed states of two qubits, one-way communication is likewise sufficient to achieve LOCC optimality \cite{Chitambar-2013b}.  In the example ensemble of Sect. \ref{Sect:1-wayVs2-way}, the supports of $\psi$ and $\rho$ do not cover the full two-qubit state space.  This suggests that there is some connection between LOCC communication complexity and the ranks of the states being discriminated.  We hope our results here can shed some new light in this direction.

\section*{Acknowledgements}

We thank Debbie Leung, Andreas Winter, and especially Laura Man\v{c}inska for providing helpful comments and corrections during the preparation of this manuscript.


\section*{Appendices}

\appendix

\section{Proof of Proposition \ref{Prop:AlicePOVM}}

We rely on the fact that the ensemble is real with each state being $(\sigma_z\otimes \sigma_z)$-invariant.  Then by an argument similar to the one by Sasaki \textit{et al.}, we can take the $\ket{a_\lambda}$ to be real in the computational basis \cite{Sasaki-1999a}, and moreover, Alice's POVM will consist of pairs
\begin{align}
\op{a_\lambda}{a_\lambda}&=c_\lambda/2(\mathbb{I}+\sin\phi_\lambda\sigma_x+\cos\phi_\lambda\sigma_z)\notag\\
\sigma_z\op{a_\lambda}{a_\lambda}\sigma_z&=c_\lambda/2(\mathbb{I}-\sin\phi_\lambda\sigma_x+\cos\phi_\lambda\sigma_z).\notag
\end{align}
We next divide Alice's outcomes $\lambda$ into two sets: $\lambda\in S_+$ if $\cos\phi_\lambda\geq 0$, and $\lambda\in S_-$ if $\cos\phi_\lambda<0$.  The completion condition demands that $\sum_{\lambda\in S_+} \cos\phi_\lambda c_\lambda+\sum_{\lambda\in S-}\cos\phi_\lambda c_\lambda=0$.  We can fine-grain the POVM elements so that $|S_+|=|S_-|$, and for every $\lambda_+\in S_+$ there exists a corresponding $\lambda_-\in S_-$ so that $\cos\phi_{\lambda_+}c_{\lambda_+}=|\cos\phi_{\lambda_-}|c_{\lambda_-}$.  We see that 
\[\op{a_{\lambda_+}}{a_{\lambda_+}}+\op{a_{\lambda_-}}{a_{\lambda_-}}+\sigma_z\op{a_{\lambda_+}}{a_{\lambda_+}}\sigma_z+\sigma_z\op{a_{\lambda_-}}{a_{\lambda_-}}\sigma_z\propto\mathbb{I}.\]
Thus, Alice's POVM can be decomposed into a collection of sub-POVMs and it suffices to consider optimality among these sub-POVMs.

\section{Optimization of Eq. \mbox{\eqref{Eq:detgeq0}}}

To minimize the RHS of Eq. \eqref{Eq:detgeq0}, we consider the Lagrangian
\begin{equation}
\mathcal{L}(q_1,\phi_0,\phi_1,\tau)=q_1(\frac{\cos\phi_1}{2}+\frac{\sqrt{1+\cos\phi_1}}{\sqrt{2}})-\tau(q_1\cos\phi_1+(1-q_1)\cos\phi_0).
\end{equation}
Exterma points satisfy the equations
\begin{align}
\frac{\partial\mathcal{L}}{\partial\phi_0}&=\tau(1-q_1)\sin\phi_0=0\notag\\
\frac{\partial\mathcal{L}}{\partial\phi_1}&=-\sin\phi_1q_1[\frac{1}{2}+\frac{1}{2\sqrt{2}\sqrt{1+\cos\phi_1}}-\tau ]=0\notag\\
\frac{\partial\mathcal{L}}{\partial q_1}&=\frac{\cos\phi_1}{2}+\frac{\sqrt{1+\cos\phi_1}}{\sqrt{2}}-\tau(\cos\phi_1-\cos\phi_0)=0\notag\\
\frac{\partial\mathcal{L}}{\partial \tau}&=q_1\cos\phi_1+(1-q_1)\cos\phi_0=0.
\end{align}
We first note that if $\tau=0$ then by the third equation, $\frac{\cos\phi_1}{2}+\frac{\sqrt{1+\cos\phi_1}}{\sqrt{2}}=0$ which corresponds to an error probability of $1/2$.  So suppose $\tau\not=0$.  Then the first equation requires $(1-q_1)\sin\phi_0=0$.  If $q_1=1$, then the fourth equation requires $\cos\phi_1=0$, which corresponds to an error probability $\frac{1}{2}\left(1-\frac{1}{\sqrt{2}}\right)$.  On the other hand, if $\sin\phi_0=0$, then we must have $\cos\phi_0=-1$ (recall that $\cos\phi_0<0$ since $\det\Delta>0$ for the $\lambda=0$ branch).  So assume that $\cos\phi_0=-1$.  We turn to the second equation and first consider when $\sin\phi_1q_1=0$.  This requires $\cos\phi_1=1$ and $q_0=q_1=1/2$, which corresponds to Alice measuring in the computational basis.  In this case, we find that the error probability is given by $\frac{1}{2}\left(1-\frac{3}{4}\right)=1/8$.  On the hand if $\sin\phi_1q_1\not=0$ in the second equation, then $\tau=\frac{1}{2}+\frac{1}{2\sqrt{2}\sqrt{1+\cos\phi_1}}$.  However, solving for $\tau$ in the third equation (with $\cos\phi_0=-1$) gives $\tau= \frac{\cos\phi_1}{2(1+\cos\phi_1)}+\frac{1}{\sqrt{2}\sqrt{1+\cos\phi}}$.  Thus,
\[\frac{1}{2}=\frac{\cos\phi_1}{2(1+\cos\phi_1)}+\frac{1}{2\sqrt{2}\sqrt{1+\cos\phi_1}}.\]
This leads to the equation $\sqrt{2}=\sqrt{1+\cos\phi_1}$, which means $\cos\phi_1=1$, corresponding again to measurement in the computational basis.

We have exhausted all possible extreme points.  Alice measuring in the $\{\ket{0},\ket{1}\}$ basis generates a global minimum of $1/8$ which beats the error probability of $\frac{1}{2}\left(1-\frac{1}{\sqrt{2}}\right)$ generated by measuring in the $\{\ket{+},\ket{-}\}$ basis.  Hence the one-way optimal LOCC probability is $1/8$.

\bibliographystyle{alphaurl}
\bibliography{QuantumBib}

\end{document}